\documentclass[11pt,letterpaper]{article}
\usepackage[margin=1in]{geometry}
\usepackage[T1]{fontenc}
\usepackage{lmodern}
\usepackage{amsmath,amssymb,amsthm,mathtools}
\usepackage{booktabs,array}
\usepackage{microtype}
\usepackage{xspace}
\usepackage[dvipsnames]{xcolor}
\usepackage[colorlinks=true,allcolors=blue]{hyperref}
\usepackage[capitalise,noabbrev]{cleveref}

\newtheorem{theorem}{Theorem}[section]
\newtheorem{lemma}[theorem]{Lemma}
\newtheorem{proposition}[theorem]{Proposition}
\newtheorem{corollary}[theorem]{Corollary}
\theoremstyle{definition}
\newtheorem{definition}[theorem]{Definition}
\theoremstyle{remark}

\newtheorem{example}[theorem]{Example}

\newcommand{\Pclass}{\mathrm{P}}
\newcommand{\NP}{\mathrm{NP}}
\newcommand{\coNP}{\mathrm{coNP}}
\newcommand{\FP}{\mathrm{FP}}
\newcommand{\DeltaTwo}{\Delta_2^{\Pclass}}
\newcommand{\ThetaTwo}{\Theta_2^{\Pclass}}
\newcommand{\SigmaTwo}{\Sigma_2^{\Pclass}}
\newcommand{\SharpP}{\#\Pclass}
\newcommand{\SharpOptP}{\#\!\cdot\!\mathrm{OptP}}
\newcommand{\SharpSAT}{\#\mathrm{SAT}}
\newcommand{\SomePAV}{\textnormal{\textsc{Some-Optimal-PAV-Member}}\xspace}
\newcommand{\AllPAV}{\textnormal{\textsc{All-Optimal-PAV-Member}}\xspace}
\newcommand{\UniquePAV}{\textnormal{\textsc{Unique-PAV}}\xspace}
\newcommand{\CountPAV}{\#\mathrm{OptimalPAV}}
\newcommand{\LexBit}{\textnormal{\textsc{LexMax-SAT-Bit}}\xspace}
\DeclareMathOperator{\pav}{pav}
\DeclareMathOperator{\Opt}{Opt}
\DeclareMathOperator{\val}{val}
\DeclareMathOperator*{\argmax}{arg\,max}
\DeclareMathOperator{\lcm}{lcm}
\newcommand{\bits}[1]{\{0,1\}^{#1}}

\title{The Complexity of Membership, Uniqueness, and Counting\\
for Optimal Proportional Approval Voting Committees}

\author{Yizhou Ai\\University of Toronto\\ \texttt{yizhou.ai@mail.utoronto.ca}}
\date{}

\begin{document}
\maketitle

\begin{abstract}
Proportional Approval Voting (PAV) chooses committees that maximize a sum of harmonic utilities. We study the set of maximizing committees: whether a candidate belongs to some or all of them, whether the optimum is unique, and how many optima exist. When the committee size is part of the input, the three decision problems are $\DeltaTwo=\Pclass^{\NP}$-complete. Uniqueness remains hard for instances with at most two optimal committees. Counting optimal committees is $\SharpOptP$-complete under metric reductions: every function $f$ in this class reduces to an election with exactly $f(x)+1$ optimal committees. The reductions encode satisfying assignments directly as committees and use harmonic marginal rewards to realize binary objectives with polynomially many voters. Each satisfying assignment has a unique committee representation, and fixed clause ballots give these representations the same clause score. We also prove Turing equivalence with $\SharpSAT$ and show that membership of the counting problem in $\SharpP$ would imply $\NP=\coNP$.
\end{abstract}

\section{Introduction}
\label{sec:introduction}

Approval ballots describe which candidates a voter would like to see on a committee. Aggregating these ballots requires deciding how additional representatives for one voter should compare with representation for another. Approval-based multiwinner rules make different choices about this tradeoff~\cite{lackner2023multiwinner}. Proportional Approval Voting (PAV) gives a voter utility $H_r=1+1/2+\cdots+1/r$ from a committee containing $r$ approved candidates, with $H_0=0$, and maximizes the sum of these utilities. Its marginal benefit for a voter decreases from $1$ to $1/2$, $1/3$, and so on as more approved candidates are selected.

PAV has a close connection to proportional representation. It satisfies extended justified representation, a requirement that sufficiently large groups with sufficiently many common approvals receive an appropriate level of representation~\cite{aziz2017jr}. Quantitative approaches, such as proportionality degree, also evaluate committee rules through the satisfaction they guarantee to groups of voters~\cite{skowron2021proportionality}. These perspectives explain why the committees maximizing the PAV objective are of interest beyond their numerical scores.

The optimization rule can, however, return several committees. A score guarantee does not tell us whether a particular candidate occurs in any winning committee, whether the candidate occurs in every winning committee, or whether all winning committees coincide. These properties describe what is determined by the ballots before any tie-breaking convention is applied. The number of maximizing committees further measures the size of the choice left unresolved by the objective.

\begin{example}
\label{ex:election}
Let the candidates be $a,b,c$, let two voters approve $\{a,b\}$ and $\{c\}$, respectively, and select two candidates. All possible committees are listed below.
\[
\begin{array}{c|ccc}
  W & \{a,b\} & \{a,c\} & \{b,c\}\\ \hline
  \pav(W)&3/2&2&2
\end{array}
\]
The candidate $c$ belongs to every optimal committee. Each of $a$ and $b$ belongs to some, but not every, optimal committee. The optimum is not unique, and the number of optimal committees is two. Adding one voter who approves only $a$ changes the scores to $5/2$, $3$, and $2$, respectively, leaving $\{a,c\}$ as the unique optimum.
\end{example}

These questions have prior complexity results: Yang~\cite{yang2023manipulating} proves $\coNP$-hardness of universal candidate membership, and Janeczko and Faliszewski~\cite{janeczko2023ties} prove $\coNP$-hardness of uniqueness and $\SharpP$-hardness of counting PAV winners. We give exact completeness classifications for these questions and for existential candidate membership. The main difficulty is to distinguish all committees at the exact optimum. Although a PAV score is at most $nH_k$ for $n$ voters and committee size $k$, its denominator must also be taken into account. Harmonic increments can distinguish objectives on a much finer scale than a count of satisfied voters. Our reductions use this precision while keeping the numbers of voters and candidates polynomial in the source input.

\subsection{Our Results}

The committee size is part of the input. For every fixed committee size, one can instead enumerate all committees in polynomial time.

\begin{table}[ht]
\centering
\small
\begin{tabular}{@{}p{0.43\linewidth}p{0.25\linewidth}@{}}
\toprule
Problem & Complexity\\
\midrule
\SomePAV & $\DeltaTwo$-complete\\
\AllPAV & $\DeltaTwo$-complete\\
\UniquePAV & $\DeltaTwo$-complete\\
$\CountPAV$ & $\SharpOptP$-complete\\
\bottomrule
\end{tabular}
\label{tab:results}
\end{table}

For candidate membership, both the existential and universal question are $\DeltaTwo$-complete. Uniqueness itself is also $\DeltaTwo$-complete.

For counting, we prove $\SharpOptP$-completeness under metric reductions. For every $f\in\SharpOptP$, we construct an election $\mathcal E_x$ with
\[
    \CountPAV(\mathcal E_x)=f(x)+1.
\]
The additive term accommodates source inputs with no feasible witnesses, since every election has an optimal committee. If the source always has a feasible witness, the construction can preserve its count exactly. We also prove Turing equivalence with $\SharpSAT$ and show that $\CountPAV\in\SharpP$ would imply $\NP=\coNP$.

The main construction represents each variable by a pair of candidates and uses fixed collections of clause voters. These voters give every satisfying assignment the same score, while each unsatisfied clause loses at least one point. Harmonic increments encode a binary objective as a reward with variation less than one, so they order the satisfying assignments without offsetting a clause violation. The Chinese remainder theorem realizes this reward with polynomially many voters. One large parameter forces the auxiliary candidates and one candidate from each variable pair.

The second ingredient controls multiplicity. Variable choices and the forced auxiliary candidates uniquely determine each committee. Together with the standard unique-extension encoding of polynomial-time computations as formulas, this lets the counting reduction preserve every source optimum. A separate representative witness breaks ties only where one representative is needed, leaving the witnesses being counted tied.

\subsection{Related Work}
\label{sec:related}

Aziz et al.~\cite{aziz2015computational} established NP-hardness of computing a PAV winning committee and $\coNP$-completeness of their winning-set verification problem, even when every voter approves two candidates. Their presentation fixes a tie-breaking order. Skowron, Faliszewski, and Lang~\cite{skowron2016collective} obtained a broader hardness result for selection with ordered weighted average utilities, which includes PAV. These results concern computing a committee, meeting a score threshold, or checking a supplied committee. Candidate membership quantifies over all committees attaining the best score, and its classification requires more than hardness of the score-threshold problem.

Yang's extended study of manipulation and control~\cite[Theorem~12]{yang2023manipulating} proves $\coNP$-hardness of deciding whether a designated candidate belongs to every winning committee for a class of Thiele rules including PAV. This is exactly \AllPAV; the result already holds when each ballot has at most two approvals. It arises as the case of constructive control with one designated candidate and no permitted additions. Our $\DeltaTwo$-completeness result strengthens this lower bound for general profiles. For other multiwinner rules, Sonar, Dey, and Misra~\cite{sonar2020verification} distinguish winner verification from candidate membership and prove $\ThetaTwo$-completeness of the latter for Chamberlin--Courant and Monroe rules, with ranking and approval ballots. Their candidate question has the same existential form as \SomePAV, but uses different objectives. Hemaspaandra, Spakowski, and Vogel~\cite{hemaspaandra2005complexity} establish the related $\ThetaTwo$-completeness of Kemeny winner determination.

Janeczko and Faliszewski~\cite{janeczko2023ties} study precisely the uniqueness and winning-committee counting questions considered here. For PAV, they prove $\coNP$-hardness of uniqueness and $\SharpP$-hardness of counting, together with parameterized hardness results. They also rule out polynomially bounded approximation ratios for the count unless $\Pclass=\NP$. Our uniqueness theorem sharpens their lower bound to $\DeltaTwo$-completeness. We strengthen the existing $\SharpP$-hardness result to $\SharpOptP$-completeness under metric reductions.

Dudycz et al.~\cite{dudycz2020tight} give a polynomial-time approximation algorithm for PAV and a broader family of Thiele rules, using linear programming and pipage rounding, with a matching NP-hardness bound on the achievable ratio. These results approximate the maximum harmonic score. The approximation question differs from counting optima: a near-optimal committee need not reveal exact ties, and a multiplicative score guarantee need not resolve candidate membership or uniqueness.

Yang and Wang~\cite{yang2023parameterized} develop a parameterized analysis of PAV winner determination, including fixed-parameter tractability in the number of voters and hardness for other parameters. Lassota and Sornat~\cite{lassota2026structured} give algorithms for Thiele rules on voter-interval profiles, where each candidate's supporters form an interval in a suitable voter order. They also obtain polynomial-time winner determination when each candidate has at most two supporters and fixed-parameter tractability in the optimum score. These findings concern finding an optimum under structural or parameter restrictions. Our classifications concern general profiles and the entire optimal set; transferring an algorithm to these questions requires preserving the relevant membership or counting information.

Representation axioms such as justified representation (JR), extended justified representation (EJR), and proportional justified representation (PJR) require adequate representation of voter groups~\cite{aziz2017jr,fernandez2017pjr}. Every PAV optimum satisfies EJR~\cite{aziz2017jr}, and an EJR committee can be found in polynomial time~\cite{aziz2018complexity}. Satisfying a representation axiom, however, does not establish membership in the set of exact PAV optima. Our questions concern the structure and size of that optimal set.


\section{Preliminaries}
\label{sec:preliminaries}

We use the standard approval-based committee model~\cite{lackner2023multiwinner}. For a positive integer $r$, write $[r]=\{1,\ldots,r\}$.

\begin{definition}
\label{def:election}
An approval election is a triple $\mathcal E=(C,\mathcal B,k)$, where $C=\{c_1,\ldots,c_m\}$ is a set of $m\ge1$ distinct candidates, $\mathcal B=(B_1,\ldots,B_n)$ is a sequence of $n\ge1$ subsets of $C$, and $k\in[m]$ is the committee size. Voter $i\in[n]$ approves the candidates in $B_i$; equal approval sets at different positions represent different voters. A committee is a subset $W\subseteq C$ with $|W|=k$. All three components, including $k$, are part of the input.
\end{definition}

The input lists the candidates and represents the profile by the $n\times m$ approval matrix whose $(i,j)$ entry is $1$ exactly when $c_j\in B_i$. The integer $k$ and any designated candidate index are encoded in binary. A committee is represented by its $m$-bit characteristic vector in the listed candidate order, so each committee has exactly one encoding.

\begin{definition}
\label{def:pav}
Let $\mathcal E=(C,\mathcal B,k)$ be an approval election. For integers $j\ge1$, let $H_j=\sum_{r=1}^j1/r$, and let $H_0=0$. The PAV score of a set $W\subseteq C$ is
\[
    \pav_{\mathcal E}(W)=\sum_{i=1}^n H_{|B_i\cap W|}.
\]
PAV selects the set of committees
\[
    \Opt(\mathcal E)=\argmax_{\substack{W\subseteq C\\|W|=k}}
                              \pav_{\mathcal E}(W).
\]
Thus, $W\in\Opt(\mathcal E)$ if and only if $W$ is a size-$k$ committee and $\pav_{\mathcal E}(W)\ge\pav_{\mathcal E}(U)$ for every size-$k$ committee $U\subseteq C$.
\end{definition}

We omit the subscript on $\pav$ when the election is understood. The set $\Opt(\mathcal E)$ is nonempty because the family of size-$k$ committees is finite and nonempty. All committees attaining the maximum are retained, without a tie-breaking rule.

\begin{definition}
\label{def:problems}
\label{def:some}
The input to \SomePAV is an approval election $\mathcal E=(C,\mathcal B,k)$ and a designated candidate $c\in C$. The answer is yes if and only if
\[
    \exists W\in\Opt(\mathcal E)\quad c\in W.
\]
\end{definition}

\begin{definition}
\label{def:all}
The input to \AllPAV is an approval election $\mathcal E=(C,\mathcal B,k)$ and a designated candidate $c\in C$. The answer is yes if and only if
\[
    \forall W\in\Opt(\mathcal E)\quad c\in W.
\]
\end{definition}

\begin{definition}
\label{def:unique}
The input to \UniquePAV is an approval election $\mathcal E$. The answer is yes if and only if
\[
    |\Opt(\mathcal E)|=1.
\]
\end{definition}

\begin{definition}
\label{def:count}
The function $\CountPAV$ takes an approval election $\mathcal E$ and has value
\[
    \CountPAV(\mathcal E)=|\Opt(\mathcal E)|.
\]
The output is written in binary. Distinct committees are counted once each, irrespective of the order in which their candidates are listed.
\end{definition}

The count is at most $\binom{m}{k}\le2^m$, so its binary representation has at most $m+1$ bits.

The decision and counting complexity notions used below are reviewed in \Cref{app:complexity}. All decision-problem hardness and completeness results use polynomial-time many-one reductions.

For exact score comparisons, set
\[
    D_k=\lcm(1,\ldots,k),\qquad
    S(W)=D_k\pav(W),\qquad
    S^*=\max_{\substack{W\subseteq C\\|W|=k}}S(W).
\]
Both $D_k$ and the integer score $S(W)$ are computable in polynomial time. Since $k\le m$, the bounds $D_k\le k!$ and $0\le S(W)\le nkD_k$ give polynomial bit lengths for these values and the upper bound $nkD_k$.

\section{Complexity of Optimal Membership}
\label{sec:membership-main}
\label{sec:decision}

\begin{theorem}
\label{thm:decision}
The problems \SomePAV and \AllPAV are each $\DeltaTwo$-complete.
\end{theorem}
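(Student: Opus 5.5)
Membership in $\DeltaTwo$ is routine, so the plan puts most of the effort into hardness. For the upper bound, a polynomial-time machine with an $\NP$ oracle computes $S^*$ by binary search over the integer range $[0,nkD_k]$. Each oracle query asks whether some size-$k$ committee $W$ has $S(W)\ge t$, and $O(\log(nkD_k))$ queries suffice, a polynomial number. The machine then makes one further query: for \SomePAV, whether some committee $W\ni c$ has $S(W)\ge S^*$; for \AllPAV, whether some committee $W\not\ni c$ has $S(W)\ge S^*$, answering the negation. Both queries are in $\NP$ because $S(W)$ is computable in polynomial time.

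For hardness I reduce from \LexBit: given a CNF formula $\varphi(x_1,\ldots,x_N)$ that is satisfiable, decide whether the last bit $x_N$ of its lexicographically maximum satisfying assignment is $1$. This problem is $\DeltaTwo$-complete, and the standard padding trick ensures satisfiability. The election has the following parts.
\begin{itemize}
\item For each variable, a pair of candidates $p_j^1,p_j^0$.
\item Auxiliary candidates, all forced into every optimum by a large parameter $L$: many voters approve each auxiliary candidate alone, and many voters approve each pair $\{p_j^1,p_j^0\}$. Together with the chosen $k$, this makes every optimum contain exactly one candidate from each pair, so committees correspond bijectively to assignments.
\item Clause voters. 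For each clause I add a fixed gadget of voters approving the literal candidates of the clause, padded with auxiliary candidates, so that the gadget's score takes one value $\alpha$ whenever the clause is satisfied, independent of how many literals are true, and is at most $\alpha-1$ otherwise. I will try to achieve this by including, for each nonempty subset of the clause's literals, voters whose harmonic contributions cancel by inclusion--exclusion, with multiplicities scaled by $D_k$ to keep the values integral.
\item Reward voters. A bonus $R(x)=\sum_j w_j x_j$ with $w_j\propto 2^{N-j}$, normalized so that total variation is below $1$, makes lexicographically larger assignments strictly better. The realization uses voters approving $p_j^1$ together with $r-1$ forced auxiliary candidates, each contributing the marginal $1/r$. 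Choosing several values of $r$ and combining them via the Chinese remainder theorem over the primes up to $k$ expresses each $w_j$ (scaled by $D_k$) with polynomially many voters.
\end{itemize}

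With this construction, the optimal committees are exactly the encodings of the lexmax satisfying assignment. Since the weights are distinct powers of two, there is a unique optimum. Taking $c=p_N^1$, both \SomePAV and \AllPAV answer yes if and only if $x_N=1$ in the lexmax assignment. The uniqueness of the optimum is what lets a single reduction serve both problems.

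The main obstacle is the clause gadget. It must give every satisfied clause exactly the same contribution even though harmonic utilities are concave, so that the tiny reward $R$ alone decides among satisfying assignments. If the inclusion--exclusion cancellation fails, I would fall back to replacing each clause by 3-SAT clauses with auxiliary variables whose values are forced uniquely, which preserves the unique-extension property. A further check is that the reward voters do not disturb the forcing; this needs $L$ large compared with $\alpha$ times the number of clauses plus $1$, while the bit lengths of all quantities stay polynomial.
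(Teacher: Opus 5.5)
Your upper bound matches the paper's. So does the overall shape of the hardness reduction: satisfiable \LexBit, one candidate pair per variable, auxiliary candidates forced by many singleton supporters, clause voters that tie on satisfied clauses, harmonic rewards of total variation below one via the Chinese remainder theorem, and a unique optimum so that one reduction serves both problems.

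\textbf{The clause gadget, as you describe it, cannot exist.} It is the heart of the construction, and you leave it at ``I will try''. Suppose every clause ballot approves only candidates that make literals of the clause true, possibly padded with forced auxiliaries. A voter with literal set $S$ and $a$ auxiliaries then receives $H_{a+|S\cap T|}$, where $T$ is the set of true literals. This is nondecreasing in $T$ and strictly increases when a literal of $S$ becomes true. Comparing $T=\{1\}$ with $T=\{1,2\}$ shows that no ballot may contain the second literal's candidate. By symmetry no ballot contains any literal candidate, and then the all-false pattern is not penalized either. So no nonnegative combination of such voters is constant on the satisfied patterns of a clause with two or more literals. An inclusion--exclusion identity would need negative multiplicities. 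Your fallback of re-encoding into 3-CNF with uniquely forced auxiliaries does not help: the result still has clauses with two or three literals and faces the same obstruction.

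\textbf{The missing idea is to get negative coefficients from the other candidate of each pair.} Every optimum contains exactly one candidate from each pair. Hence a ballot containing the candidate $f_h$ that makes a literal false acts like a term in $1-x$. The paper's gadgets are as follows.
\begin{itemize}
\item Two literals: two voters $\{c_1,c_2\}$ and one voter $\{f_1,f_2\}$. The scores are $3/2,3,3$ for $q=0,1,2$ true literals.
\item Three literals: six voters $\{c_1,c_2,c_3\}$; voters $\{c_h,f_j\}$ and $\{f_h,c_j\}$ for each $h<j$; and three singleton voters $\{f_h\}$ for each $h$. The scores are $15,17,17,17$.
\end{itemize}

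\textbf{Your reward step needs the same device, and your sketch omits it.} A nonnegative combination of increments $1/r$ with $r\le M$ is either zero or at least $1/M$. Lexicographic weights built this way would therefore total more than $2^{N-1}/M$, far above a clause penalty. The Chinese remainder theorem only gives $\sum_j\alpha_{ij}/q_j=2^{\ell-i}/L+\beta_i$ with an integer $\beta_i\ge0$. The paper cancels $\beta_i$ with $\beta_i$ singleton voters for $v_{b_i}^0$.

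\textbf{Two smaller points on the rewards.} You also need $L\ge2^N$. The paper takes $M=2\ell$, uses $2\ell-1$ auxiliaries, and proves $\binom{2\ell}{\ell}\mid\lcm(1,\ldots,2\ell)$. Using primes up to $k$, as you suggest, is impossible: a ballot with increment $1/r$ needs $r-1$ selected auxiliaries, while $k$ also counts the variable candidates.
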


\begin{proposition}
\label{prop:upper}
\SomePAV and \AllPAV belong to $\DeltaTwo$.
\end{proposition}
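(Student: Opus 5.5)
The plan is to show that both problems can be solved by a deterministic polynomial-time machine with an $\NP$ oracle. The machine first computes the optimal integer score $S^*$ by binary search, and then asks one further oracle query whose answer decides membership. The preliminaries already give what we need: $D_k$ and $S(W)$ are computable in polynomial time, and $0\le S^*\le nkD_k$. Since $D_k\le k!$, the quantity $nkD_k$ has polynomially many bits, so binary search over the integers in $[0,nkD_k]$ takes only polynomially many steps.

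The first step is to fix the oracle language. I would use
\[
    L=\{(\mathcal E,t,c,b) : \text{some size-}k\ W\subseteq C \text{ has } S(W)\ge t,\ \text{and } c\in W \text{ if } b=1,\ c\notin W \text{ if } b=0\},
\]
together with a variant without the candidate constraint. $L$ is in $\NP$: a certificate is the $m$-bit characteristic vector of $W$, and the verifier checks $|W|=k$, checks the constraint on $c$, and computes $S(W)$ exactly in integer arithmetic to compare it with $t$.

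The second step is the binary search. Using the unconstrained oracle "is there a committee $W$ with $S(W)\ge t$?", the machine finds the largest integer $t$ for which the answer is yes. That value is $S^*$, because every $S(W)$ is an integer.

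The third step decides the two problems with one more query each. \SomePAV accepts exactly when $(\mathcal E,S^*,c,1)\in L$: any witness $W$ then has $S(W)\ge S^*$, which forces $S(W)=S^*$, so $W\in\Opt(\mathcal E)$ and $c\in W$. \AllPAV accepts exactly when $(\mathcal E,S^*,c,0)\notin L$, meaning no optimal committee omits $c$. Deterministic $\Pclass$ machines may negate oracle answers, so this is legitimate. (If $k=m$, the answer to both problems is yes, since $C$ is the only committee; the oracle handles this case correctly anyway.)

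I do not expect a real obstacle. The only point needing care is exactness. Working with the scaled scores $S(W)=D_k\pav(W)$ avoids rational comparisons, keeps every number polynomially long, and makes the search over integer thresholds exact. This is what lets the final query pin down the optimum exactly rather than approximately.
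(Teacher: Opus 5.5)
Your proposal is correct and follows the paper's proof: binary search over $[0,nkD_k]$ with NP threshold queries to find the integer optimum $S^*$, then one final NP query asking for a size-$k$ committee of score $S^*$ that contains $c$ (for \SomePAV) or excludes $c$, with the answer negated (for \AllPAV). Your added remarks on exact integer scores and the $k=m$ case are consistent with this argument but not needed.
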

\begin{proof}
For an integer threshold $T$, the query
\begin{equation}
\label{eq:threshold}
    \exists W\subseteq C:\quad |W|=k\ \text{and}\ S(W)\ge T
\end{equation}
is in $\NP$, since a guessed committee can be checked in polynomial time. Binary search over $[0,nkD_k]$ therefore computes $S^*$ using $O(\log(nkD_k+1))$ NP queries, which is polynomial in the input length. To decide \SomePAV, ask whether a size-$k$ committee containing $c$ has score $S^*$. To decide \AllPAV, ask whether a size-$k$ committee excluding $c$ has score $S^*$, and negate the answer. Each final query is in $\NP$.
\end{proof}

For the hardness proof, we use lexicographic satisfiability. An ordering $\pi=(x_1,\ldots,x_s)$ of Boolean variables identifies an assignment with a vector $a=(a_1,\ldots,a_s)\in\bits{s}$, where $a_i$ is the value of $x_i$. For distinct assignments $a,b$, write $a>_{\mathrm{lex}}b$ if $a_r=1$ and $b_r=0$ at the least index $r$ for which $a_r\ne b_r$. Thus, $x_1$ is the most significant variable and $1$ is preferred to $0$.

\begin{definition}
\label{def:lexbit}
An instance of \LexBit is a triple $(\varphi,\pi,h)$ with the following components:
\begin{enumerate}
    \item $\pi=(x_1,\ldots,x_s)$ is a list of distinct Boolean variables, with $s\ge1$;
    \item $\varphi$ is a 3-CNF formula whose literals are over these variables;
    \item $h\in[s]$ is the designated position.
\end{enumerate}
Here, a 3-CNF formula is a conjunction of clauses, each containing between one and three literals. Clauses may repeat variables, and the empty conjunction is interpreted as true. The formula is encoded by its explicit list of clauses, the variable order by $\pi$, and the index $h$ in binary.

Let $\mathcal S(\varphi,\pi)=\{a\in\bits{s}:\varphi(a)=1\}$, with coordinates interpreted according to $\pi$. If $\mathcal S(\varphi,\pi)=\varnothing$, the answer is no. Otherwise, let $a^*$ be its unique maximum under $>_{\mathrm{lex}}$. The answer is yes if and only if $a_h^*=1$.
\end{definition}

Throughout the constructions, each formula has an explicitly specified variable set, which may include variables absent from its clauses. We also write $x^*$ for the maximizing assignment, with $x_h^*$ denoting its value on variable $x_h$.

Krentel~\cite[Section~2.2 and Theorem~3.4]{krentel1988optimization} gives a $\DeltaTwo$-complete version of this problem for general Boolean formulas, where the designated variable is last in the order and unsatisfiable formulas are no-instances.

For our reductions, we need satisfiable 3-CNF instances. Given a source formula $F(y_1,\ldots,y_t)$ with this variable order, introduce fresh variables $a,b$ and apply the standard definitional 3-CNF conversion to
\[
    (\neg a\vee F(y))\wedge\bigl(b\leftrightarrow(a\wedge y_t)\bigr).
\]
Order the resulting variables as $a,y_1,\ldots,y_t,b$, followed by all auxiliary variables, and designate $b$. The formula is satisfiable with $a=0$. If $F$ is satisfiable, lexicographic maximization first chooses $a=1$ and then the maximum satisfying $y$ of $F$, so $b$ records the original answer. If $F$ is unsatisfiable, the formula forces $a=b=0$. This polynomial-time conversion preserves the answer and always produces a satisfiable instance of \LexBit. Hence, \LexBit remains $\DeltaTwo$-hard under this restriction.

\subsection{Reduction from \texorpdfstring{\LexBit}{LexMax-SAT-Bit}}
\label{sec:membership-reduction}

We reduce \LexBit to both membership problems. Let $(\varphi,\pi,h)$ be a source instance with $\varphi$ satisfiable, and let $x^*$ be its lexicographically maximum satisfying assignment under $\pi$.

We represent each variable by two candidates, one for true and one for false. The construction ensures that every optimal committee selects exactly one candidate from each pair, and that these choices encode $x^*$. Let $c$ be the candidate representing the true value of $x_h$. If $x_h^*=1$, then $c$ belongs to every optimal committee; if $x_h^*=0$, then $c$ belongs to none of them. Hence, asking whether $c$ belongs to some optimal committee or to every optimal committee gives the same answer as the source problem.

Clause ballots give the same score whenever their clause is satisfied and a lower score otherwise. Once the variable choices encode a satisfying assignment, fractional rewards order the committees by the designated objective bits. For membership these bits are all formula variables; for counting they will be only the source objective's output bits.

To implement this reduction, we associate one committee with each satisfying assignment and make its PAV score an affine function of a designated binary objective. We prove the construction for an arbitrary choice of objective bits so that it can also be used in the counting reduction.

\begin{theorem}
\label{thm:compiler}
Let $\varphi$ be a satisfiable CNF formula with nonempty clauses of size at most three, and let $b_1,\ldots,b_\ell$ be distinct designated variables, where $\ell\ge1$. In polynomial time, one can construct an approval election $\mathcal E_\varphi$ with committee size $K$ and an injective map $x\mapsto W_x$ from satisfying assignments of $\varphi$ to size-$K$ committees such that
\begin{enumerate}
    \item every optimal committee is $W_x$ for some satisfying assignment $x$;
    \item for constants $\Gamma$ and $L>0$ independent of $x$,
    \begin{equation}
    \label{eq:compiler-score}
        \pav(W_x)=\Gamma+\frac{J(x)}{L},
        \qquad J(x)=\sum_{i=1}^{\ell}2^{\ell-i}b_i;
    \end{equation}
    \item for each formula variable $x_j$, a designated candidate $v_j^1$ belongs to $W_x$ exactly when $x_j=1$.
\end{enumerate}
Consequently, optimal committees are in bijection with satisfying assignments maximizing $J$.
\end{theorem}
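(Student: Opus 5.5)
The plan is to build $\mathcal E_\varphi$ from four groups of voters over the following candidates: two candidates $v_j^1,v_j^0$ for each formula variable $x_j$, and a pool $A$ of auxiliary candidates. The committee size is $K=|A|+s$, and $W_x=A\cup\{v_j^{x_j}:j\in[s]\}$. Injectivity and property~3 are then immediate. Each voter group contributes a term to $\pav(W_x)$ that depends on $x$ in a controlled way, and the terms are given weights on separated scales. A single large multiplicity $M$ forces the shape of optimal committees, the clause gadgets penalize each unsatisfied clause by at least one point, and a reward term of total variation less than one encodes $J(x)/L$.

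\textbf{Step 1 (forcing).} For every $a\in A$ I add $M$ voters approving only $\{a\}$. For every $j$ I add $M$ voters approving $\{v_j^0,v_j^1\}$. With $|A|+s=K$, a committee that omits some $a\in A$ or takes neither member of some pair loses roughly $M$ relative to a committee of the form $W_x$. It gains back only a bounded amount from the other groups, since every other voter's utility is bounded by $H_K$ times a polynomial multiplicity. Taking both members of a pair gives only $1/2$ per such voter instead of $1$, so it also cannot compensate. Choosing $M$ larger than the total weight of all other voters times $H_K$ therefore forces every optimal committee to equal $W_x$ for some $x\in\bits{s}$.

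\textbf{Step 2 (clause gadgets).} For a clause with literal candidates $\ell_1,\ldots,\ell_t$ (after merging repeated variables; contradictory clauses are tautologies and can be dropped), let $r$ be the number of true literals in $W_x$. The candidates for the complementary literals then contribute exactly $t-r$ approvals. A voter approving a $p$-set of forced auxiliaries together with some subset $S$ of literal candidates, or of their complements, receives $H_{p+|S\cap W_x|}$. These functions of $r\in\{0,\ldots,t\}$ span the $(t+1)$-dimensional space of functions on $\{0,\ldots,t\}$ modulo constants. I will solve explicitly, for $t=1,2,3$, for nonnegative rational multiplicities whose total utility $g(r)$ is constant for $r\ge1$ and smaller at $r=0$. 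Negative coefficients are absorbed by using complemented ballots and padding, since $H_{p+t-r}$ is increasing where $H_{p+r}$ is decreasing. The multiplicities are then scaled to integers so that $g(0)\le g(1)-1$. This small linear-algebra computation, with the sign constraints, is the step I expect to be the main obstacle. If a direct solution with padding $p=0$ fails, I will try larger fixed paddings, because the differences of $H_{p+r}$ flatten as $p$ grows and this gives more freedom. Since $\varphi$ is satisfiable, a satisfying $W_x$ beats every unsatisfying $W_{x'}$ as soon as the reward group varies by less than $1$.

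\textbf{Step 3 (reward).} I pick the first distinct primes $q_1,\ldots,q_r$ with $L=\prod q_j>2^\ell$, so $r=O(\ell)$ and each $q_j$ has polynomial size. For each $i$, the Chinese remainder theorem gives integers $0\le e_{ij}<q_j$ with $\sum_j e_{ij}L/q_j\equiv 2^{\ell-i}\pmod L$. Matching $2^{\ell-i}/L$ exactly may require a further integer correction and a scaling by a common constant; I will fold the correction into the $1/q$ terms or into $\Gamma$. A voter approving $v_{b_i}^1$ together with $q_j-1$ forced auxiliaries contributes exactly $1/q_j$ more when $b_i=1$, so $e_{ij}$ copies realize the term $e_{ij}/q_j$. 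The pool $A$ needs only $\max_j q_j-1$ auxiliaries, which is polynomial. Summing over $i$ gives $\pav(W_x)=\Gamma+J(x)/L'$, where $L'$ is a fixed multiple of $L$ chosen so that the total variation $(2^\ell-1)/L'$ is below one. The constant $\Gamma$ collects the forcing contributions, the constant clause score $g(1)$ per clause, and the base utilities.

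\textbf{Step 4 (conclusion).} I will choose $M$ last, larger than the total weight of the clause and reward groups. With that choice the optimal committees are exactly the $W_x$ with $x$ satisfying $\varphi$ and maximizing $J$, which gives property~1 and the stated bijection.
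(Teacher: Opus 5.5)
Your Step~2 does not go through for three-literal clauses, and the problem is structural rather than a matter of finding the right numbers. The issue is the sign constraint, not spanning. Consider a ballot made of padding auxiliaries plus true-literal candidates. After symmetrizing over permutations of the literals, which you may do because the target $g$ depends only on $r$, its score has nonnegative, nonincreasing increments in $r$. A complemented ballot has nonpositive increments of nondecreasing magnitude. Write the increment vector $(g(1)-g(0),g(2)-g(1),g(3)-g(2))$ as $X-Y$, with $X$ coming from the first kind of ballot and $Y$ from the second. Constancy on $r\ge1$ needs $X_2=Y_2$ and $X_3=Y_3$, so $X_3\le X_2=Y_2\le Y_3=X_3$. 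Both parts therefore have equal second and third increments. That happens only for ballot families with at most one literal candidate, because $k$-subset families with $k\ge2$ are strictly concave there. Such families give the linear functions $r/(p+1)$ and $(3-r)/(p+1)$, so $X=Y$ is a multiple of $(1,1,1)$ and $g(0)=g(1)$. In particular, the full ballots $H_{p+r}$ and $H_{p+3-r}$ you plan to use fail for every padding $p$. The missing idea is a \emph{mixed} ballot, approving the candidate that makes one literal true and the candidate that makes another literal false, which contributes a convex function of $r$. The paper adds $\{c_h,f_j\}$ and $\{f_h,c_j\}$ for each $h<j$. Each pair of these ballots scores $2$ when the two literals agree and $3/2$ when they differ, for a total of $6-\tfrac12 r(3-r)$. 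Together with six ballots $\{c_1,c_2,c_3\}$ and three singletons $\{f_h\}$ per $h$, the clause scores become $15,17,17,17$. Your approach does work for $t\le2$: one voter for $t=1$, and two full ballots plus one complemented ballot for $t=2$.

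There is also a smaller error in Step~3: the integer correction cannot be folded into $\Gamma$. The voters realizing $e_{ij}/q_j$ contribute $b_i\sum_j e_{ij}/q_j=b_i(2^{\ell-i}/L+\beta_i)$, where $\beta_i\ge0$ is an integer. It is generally positive, and for $i=\ell$ it must be, since every $L/q_j\ge2$. The surplus $\beta_i b_i$ therefore depends on $x$, and because it is an integer it would override the ordering by $J$. It has to be cancelled by voters, for instance $\beta_i$ singleton ballots for $v_{b_i}^0$. These contribute $\beta_i(1-b_i)$ because exactly one candidate of each pair is selected, and this is how the paper does it. Your Step~1 is sound with a single multiplicity $M$. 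The forcing margin is $M/2$ rather than about $M$, because a doubly chosen pair still earns $3M/2$.
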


The clause voters must make all satisfying assignments tie: any extra preference could change which assignments maximize $J$. We give each violated clause a loss of at least one and keep the entire variation in objective rewards below one. Thus, no reward can offset a clause violation, and among satisfying assignments only $J$ determines the score order.

Delete tautological clauses and repeated occurrences of a literal within a clause. Keep the explicitly specified variable set, including variables absent from the remaining clauses. This preserves every satisfying assignment and its objective value. We continue to write $\varphi$ for the resulting formula; each remaining clause has between one and three literals on distinct variables.

Let $x_1,\ldots,x_s$ be the formula variables. For each $x_i$, introduce the candidate pair $V_i=\{v_i^0,v_i^1\}$. Introduce also $d=2\ell-1$ auxiliary candidates and set
\[
    D=\{d_1,\ldots,d_d\},\qquad
    C=D\mathbin{\dot\cup}\bigcup_{i=1}^s V_i,\qquad K=d+s.
\]
The auxiliary candidates will supply the harmonic increments used for the objective. For each assignment $x\in\bits{s}$, define
\[
    W_x=D\cup\{v_i^{x_i}:i\in[s]\}.
\]
We will force every optimum to contain $D$ and exactly one candidate from each $V_i$. Its variable choices then determine the entire committee.

For a clause $(\lambda_1\vee\cdots\vee\lambda_r)$, write $c_h$ and $f_h$ for the candidates that make $\lambda_h$ true and false, respectively. A single ballot $\{c_1,\ldots,c_r\}$ would favor additional true literals even after the clause is satisfied. The following \emph{clause voters} cancel that preference:
\begin{enumerate}
    \item If $r=1$, add one voter approving $\{c_1\}$.
    \item If $r=2$, add two voters approving $\{c_1,c_2\}$ and one approving $\{f_1,f_2\}$.
    \item If $r=3$, add six voters approving $\{c_1,c_2,c_3\}$. For each $1\le h<j\le3$, add one voter approving $\{c_h,f_j\}$ and one approving $\{f_h,c_j\}$. For each $h\in[3]$, add three voters approving only $f_h$.
\end{enumerate}
Let $q$ be the number of true literals in this clause under $x$. The clause voters' contribution to $W_x$ is
\[
\begin{array}{c|c|cccc}
\text{Clause length}&\text{Voters}&q=0&q=1&q=2&q=3\\ \hline
1&1&0&1&\text{--}&\text{--}\\
2&3&3/2&3&3&\text{--}\\
3&21&15&17&17&17
\end{array}
\]
For a unit clause the score is $H_q$, and for a binary clause it is $2H_q+H_{2-q}$. For a ternary clause, there are $q(3-q)$ pairs of literals with different truth values. The two mixed ballots for a pair contribute $3/2$ when its truth values differ and $2$ when they agree. The total score is therefore
\[
    6H_q+6-\frac12q(3-q)+3(3-q),
\]
which gives the last row of the table. Thus, every satisfied clause attains its fixed maximum, irrespective of how many of its literals are true.

Let $t_r$ be the number of $r$-literal clauses, and let $u_r(x)$ count those not satisfied by $x$. The number of clause voters and their score on a satisfying assignment are, respectively,
\[
    N_C=t_1+3t_2+21t_3,\qquad C_\varphi=t_1+3t_2+17t_3.
\]
For every assignment $x$, their total contribution is
\begin{equation}
\label{eq:clause-score}
    S_C(W_x)=C_\varphi-u_1(x)-\frac32u_2(x)-2u_3(x).
\end{equation}

\begin{example}
\label{ex:clause}
For $\varphi=(x_1\vee x_2)$, two clause voters approve $\{v_1^1,v_2^1\}$ and one approves $\{v_1^0,v_2^0\}$. They give score $3/2$ to $W_{00}$ and score $3$ to each of $W_{01}$, $W_{10}$, and $W_{11}$. The three satisfying assignments remain tied until the objective rewards are added.
\end{example}

We next implement the binary objective as a reward with variation less than one. For a designated variable $b_i$, denote its two candidates by $v_{b_i}^0,v_{b_i}^1$. If $q-1$ approved auxiliary candidates are selected, a voter approving those candidates and $v_{b_i}^1$ gains exactly $H_q-H_{q-1}=1/q$ when $b_i$ changes from zero to one.

Directly creating $2^{\ell-i}$ supporters for bit $i$ could require exponentially many voters. Instead, we realize $2^{\ell-i}/L$ as a sum of small-denominator increments and an integer offset. A singleton voter for the zero candidate supplies the offset: choosing the one candidate forgoes that voter's point. All voter counts remain nonnegative. The following arithmetic lemma provides the coefficients and ensures $L\ge2^\ell$, so $0\le J/L<1$.

\begin{example}
\label{ex:crt}
For $\ell=2$, we have $M=4$, $L=12$, and can take $q_1=4$, $q_2=3$. The two reward coefficients satisfy
\[
    \frac{2}{12}=\frac{2}{4}+\frac{2}{3}-1,
    \qquad
    \frac{1}{12}=\frac{3}{4}+\frac{1}{3}-1.
\]
The integer subtraction is implemented by singleton ballots for the zero candidate, rather than by negative voter counts.
In the first identity, two voters provide increments $1/4$, two provide increments $1/3$, and one singleton supporter of the zero candidate supplies the offset. The reward difference between choosing the one and zero candidates is exactly $1/6$. The second identity similarly gives $1/12$. Together they order bit strings by $2b_1+b_2$.
\end{example}

\begin{lemma}
\label{lem:crt}
Set $M=2\ell$ and $L=\lcm(1,\ldots,M)$. One can compute pairwise coprime prime powers $q_1,\ldots,q_r\le M$ with $L=\prod_{j=1}^r q_j$, and nonnegative integers $\alpha_{ij}<q_j$ and $\beta_i<r$, such that
\begin{equation}
\label{eq:crt}
    \sum_{j=1}^r\frac{\alpha_{ij}}{q_j}-\beta_i
        =\frac{2^{\ell-i}}{L}
    \qquad (1\le i\le\ell).
\end{equation}
All computations take polynomial time, $L\ge2^\ell$, and
\begin{equation}
\label{eq:reward-size}
    N_B:=\sum_{i,j}\alpha_{ij}+\sum_i\beta_i
       \le\ell rM\le4\ell^3.
\end{equation}
\end{lemma}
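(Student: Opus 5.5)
The plan is to write the prime factorization $L=\prod_{j=1}^r q_j$, where $q_j$ is the largest power of the $j$-th prime $p_j\le M$ that does not exceed $M$. These $q_j$ are pairwise coprime prime powers with $q_j\le M$ and $\prod_j q_j=\lcm(1,\ldots,M)=L$. We list the primes up to $M=2\ell$ by trial division and compute each $q_j$ by repeated multiplication. This takes time polynomial in $\ell$, and $r\le M$. For the bound $L\ge 2^\ell$, I would use the standard estimate $\lcm(1,\ldots,n)\ge 2^{n-1}$, or more simply $\lcm(1,\ldots,2\ell)\ge \ell\binom{2\ell}{\ell}\ge 2^\ell$. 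The latter holds because $\ell\binom{2\ell}{\ell}$ divides $\lcm(1,\ldots,2\ell)$, which is Nair's argument. The small cases can also be checked directly: $L=2$ for $\ell=1$ and $L=12$ for $\ell=2$.

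Next, for each $i$, I need the partial fraction decomposition of $2^{\ell-i}/L$. Set $L_j=L/q_j$. Since the $q_j$ are pairwise coprime, $\gcd(L_j,q_j)=1$. Let $\alpha_{ij}$ be the residue of $2^{\ell-i}L_j^{-1}$ modulo $q_j$, computed by the extended Euclidean algorithm, so that $0\le\alpha_{ij}<q_j$. Then
\[
\sum_j \frac{\alpha_{ij}}{q_j}=\frac{\sum_j \alpha_{ij}L_j}{L},
\]
and the numerator satisfies $\sum_j\alpha_{ij}L_j\equiv 2^{\ell-i}\pmod{q_j}$ for every $j$, because $L_{j'}\equiv0\pmod{q_j}$ for $j'\ne j$. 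By the Chinese remainder theorem, $\sum_j\alpha_{ij}L_j\equiv 2^{\ell-i}\pmod L$. Hence $\sum_j\alpha_{ij}/q_j-2^{\ell-i}/L$ is an integer; call it $\beta_i$. This gives \eqref{eq:crt}, and all numbers involved have $O(\ell\log\ell)$ bits, so the computation is polynomial.

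The remaining point, which is the only delicate step, is to check that $0\le\beta_i<r$. The upper bound is immediate: $\sum_j\alpha_{ij}/q_j<r$ because each term is less than $1$. For the lower bound we need $\sum_j\alpha_{ij}/q_j\ge 2^{\ell-i}/L$. The left side is a nonnegative rational congruent to $2^{\ell-i}/L$ modulo $1$, and $0<2^{\ell-i}/L<1$ because $2^{\ell-i}\le 2^{\ell-1}<L$. So it cannot be $0$, and any positive number congruent to $2^{\ell-i}/L$ modulo $1$ is at least $2^{\ell-i}/L$. Therefore $\beta_i\ge0$, and in fact $\beta_i\le r-1$.

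Finally, for \eqref{eq:reward-size}, each $\alpha_{ij}\le q_j-1<M$ and each $\beta_i<r$. Summing gives $N_B\le \ell rM+\ell r$. This is slightly off from the stated $\ell rM$, so I would tighten it: $\sum_j\alpha_{ij}\le\sum_j(q_j-1)$ and $\beta_i\le r-1$, so the per-$i$ total is at most $\sum_j q_j-1\le rM$. This yields $N_B\le\ell rM\le \ell\cdot 2\ell\cdot 2\ell=4\ell^3$, using $r\le M=2\ell$.
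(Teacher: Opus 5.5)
Your proposal is correct and follows the paper's proof essentially step for step: the same prime-power factorization of $L$, the same choice of $\alpha_{ij}$ as the residue of $2^{\ell-i}(L/q_j)^{-1}$ modulo $q_j$, the same observation that $\beta_i$ lies strictly between $-1$ and $r$, and the same per-$i$ bound $\sum_j(q_j-1)+(r-1)<rM$. The only difference is in proving $L\ge2^\ell$: you cite Nair's divisibility $\ell\binom{2\ell}{\ell}\mid\lcm(1,\ldots,2\ell)$, whereas the paper proves the weaker divisibility $\binom{2\ell}{\ell}\mid L$ directly from Legendre's formula, which already suffices.
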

\begin{proof}
First, $\binom{2\ell}{\ell}$ divides $L$. For any prime $p$, writing $\nu_p$ for the exponent of $p$ in an integer, we have
\[
    \nu_p\binom{2\ell}{\ell}
      =\sum_{a\ge1}\left(
         \left\lfloor\frac{2\ell}{p^a}\right\rfloor
          -2\left\lfloor\frac{\ell}{p^a}\right\rfloor\right)
      \le \lfloor\log_p(2\ell)\rfloor=\nu_p(L).
\]
Each summand is zero or one, and it vanishes for $p^a>2\ell$. Moreover,
\[
    \binom{2\ell}{\ell}
      =\prod_{j=1}^{\ell}\frac{\ell+j}{j}\ge2^\ell,
\]
which proves the lower bound on $L$.

For each prime at most $M$, take its largest power at most $M$. These are the $q_j$; they can be found by a sieve and repeated multiplication. Let $w_i=2^{\ell-i}$. Since $\gcd(L/q_j,q_j)=1$, choose the unique $\alpha_{ij}\in\{0,\ldots,q_j-1\}$ satisfying
\[
    \alpha_{ij}(L/q_j)\equiv w_i\pmod{q_j}.
\]
The Chinese remainder theorem implies that
\[
    \beta_i=\frac{\sum_j\alpha_{ij}L/q_j-w_i}{L}
\]
is an integer. Because $0<w_i<L$ and $0\le\sum_j\alpha_{ij}/q_j<r$, this integer lies strictly between $-1$ and $r$, so $0\le\beta_i<r$. This proves \eqref{eq:crt}. For each $i$, the sum of its multiplicities is at most $r(M-1)+(r-1)<rM$, giving \eqref{eq:reward-size}. Finally, $L\le M!$ has polynomial bit length, and gcd and modular inverse computations use polynomial time.
\end{proof}

Use the coefficients from \Cref{lem:crt}. For every $i,j$, add $\alpha_{ij}$ voters approving
\[
    \{v_{b_i}^1,d_1,\ldots,d_{q_j-1}\},
\]
and for each $i$, add $\beta_i$ voters approving only $v_{b_i}^0$. The $d=2\ell-1$ auxiliary candidates suffice because $q_j\le2\ell$. These are the \emph{reward voters}, whose total number is $N_B$. For every assignment $x$, their contribution is
\begin{align}
\label{eq:reward}
    S_B(W_x)
      &=\sum_{i,j}\alpha_{ij}\left(H_{q_j-1}+\frac{b_i}{q_j}\right)
         +\sum_i\beta_i(1-b_i)\notag\\
      &=B_0+\frac{J(x)}L,
\end{align}
where $B_0=\sum_{i,j}\alpha_{ij}H_{q_j-1}+\sum_i\beta_i$ is independent of $x$. In particular, the difference in reward score between any two such committees is strictly less than one.

To enforce the form $W_x$, we use $P$ group voters per variable pair and $P^2$ singleton supporters per auxiliary candidate. Once the auxiliary candidates are fixed, the group voters favor one choice per pair: a first choice earns $P$, whereas a second earns only $P/2$. Choose
\begin{equation}
\label{eq:parameters}
    P=2K(N_C+N_B+1).
\end{equation}
For each variable $x_i$, the \emph{group voters} approve $V_i$; the singleton supporters of $d_j$ approve only $d_j$. This completes the election.

\begin{proof}[Proof of \Cref{thm:compiler}]
Suppose a size-$K$ committee omits an auxiliary candidate $d_j$. It then contains at least $s+1$ variable candidates. Replace any one of them by $d_j$. At most $P$ group voters and $N_C+N_B$ clause and reward voters lose a selected approved candidate, each losing at most one point. The singleton voters for $d_j$ gain $P^2$ points. By the choice of $P$, the net gain is at least
\[
    P^2-P-N_C-N_B>0.
\]
Thus, every optimal committee contains $D$ and exactly $s$ variable candidates.

If these variable candidates do not include exactly one from each pair, some pair contributes two and another contributes none. The group score is then at most $P(s-1/2)$. Replacing the variable choices by one candidate from each pair gains at least
\[
    P/2=K(N_C+N_B+1)>(N_C+N_B)H_K,
\]
which exceeds the maximum possible loss from all clause and reward voters. The auxiliary singleton score stays fixed. Consequently, every optimal committee is $W_x$ for some assignment $x$.

For any assignment $x$, the auxiliary singleton voters contribute $dP^2$ and the group voters contribute $sP$. Adding \eqref{eq:clause-score} and \eqref{eq:reward} gives the full score formula
\begin{equation}
\label{eq:assignment-score}
    \pav(W_x)=\Gamma-u_1(x)-\frac32u_2(x)-2u_3(x)+\frac{J(x)}L,
\end{equation}
where $\Gamma=dP^2+sP+C_\varphi+B_0$. If $x$ fails any clause, its penalty is at least one and $J(x)/L<1$, so $\pav(W_x)<\Gamma$. Since $\varphi$ is satisfiable, some satisfying assignment gives score at least $\Gamma$. Hence, every optimal committee represents a satisfying assignment. For such assignments all three penalties vanish, and \eqref{eq:assignment-score} becomes $\pav(W_x)=\Gamma+J(x)/L$. Precisely the assignments maximizing $J$ therefore give optimal committees. Distinct assignments choose different variable candidates, and $v_i^1\in W_x$ exactly when $x_i=1$, proving the required correspondence.

Finally, the construction has $2s+d$ candidates and
\[
    N_C+N_B+sP+dP^2
\]
voters. The clause voters number at most $21(t_1+t_2+t_3)$, and \Cref{lem:crt} gives $N_B\le4\ell^3$. Since $d=2\ell-1$ and $K=s+d$, the parameter $P$ has polynomial numerical magnitude. Reward ballots approve at most $2\ell$ candidates, and all other ballots approve at most three. Thus, the entire election can be constructed in polynomial time. The large integers $L$ and $2^\ell$ occur only in arithmetic computations, whose bit lengths are polynomial.
\end{proof}

\begin{proof}[Proof of \Cref{thm:decision}]
Take an instance $(\varphi,\pi,h)$ of \LexBit with $\varphi$ satisfiable. Apply \Cref{thm:compiler} with all formula variables as objective bits, in their given order:
\[
    J(x)=\sum_{i=1}^s2^{s-i}x_i.
\]
The unique maximizing assignment is the lexicographically maximum satisfying assignment $x^*$. By the bijection in \Cref{thm:compiler}, the constructed election has exactly one optimal committee $W_{x^*}$. Set $c=v_h^1$. Then
\begin{align*}
    x_h^*=1
    &\iff c\in W_{x^*}\\
    &\iff \exists W\in\Opt(\mathcal E_\varphi):c\in W\\
    &\iff \forall W\in\Opt(\mathcal E_\varphi):c\in W.
\end{align*}
Since the satisfiable restriction of \LexBit is $\DeltaTwo$-hard, this proves $\DeltaTwo$-hardness of both membership problems. Their upper bounds follow from \Cref{prop:upper}.
\end{proof}

Combining \Cref{ex:clause,ex:crt} gives scores $\Gamma+1/12$, $\Gamma+2/12$, and $\Gamma+3/12$ for $W_{01}$, $W_{10}$, and $W_{11}$, respectively. Thus, $W_{11}$ is the unique optimum, and testing $v_h^1\in W_{11}$ recovers the $h$th bit of its assignment.

The two membership problems are not complements: the complement of membership in some optimum requires exclusion from every optimum. They share the reduction above because the constructed instance has a unique optimum, so all optimal committees agree on the designated candidate.

\section{Complexity of Uniqueness}
\label{sec:uniqueness}

\begin{theorem}
\label{thm:uniqueness}
\UniquePAV is $\DeltaTwo$-complete. Hardness holds even when there are at most two optimal committees.
\end{theorem}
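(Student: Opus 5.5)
Membership in $\DeltaTwo$ follows the pattern of \Cref{prop:upper}. First compute $S^*$ by binary search with $O(\log(nkD_k+1))$ NP queries. Then decide whether two distinct size-$k$ committees $W\ne U$ both have integer score $S(W)=S(U)=S^*$. That question is in $\NP$, since both committees can be guessed and checked. The answer to \UniquePAV is the negation of this final query, so the problem lies in $\Pclass^{\NP}=\DeltaTwo$.

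For hardness, I would reduce from the satisfiable restriction of \LexBit and reuse \Cref{thm:compiler}. The aim is to obtain an election with one optimum when $x_h^*=1$ and exactly two optima when $x_h^*=0$, or the reverse; the reverse is also fine because $\DeltaTwo$ is closed under complement. Given $(\varphi,\pi,h)$ with $\varphi$ satisfiable, I would build a modified formula $\varphi'$ by adding one fresh variable $z$ and clauses encoding $z\to\neg x_h$, that is, the single clause $(\neg z\vee\neg x_h)$. The objective bits would be $x_1,\ldots,x_s$ in order, followed by $z$ as the least significant bit. The main obstacle is that \Cref{thm:compiler} gives strictly distinct scores to assignments with distinct $J$ values. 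A tie therefore cannot be produced by adding a bit; it has to come from a free variable that is \emph{not} an objective bit. I would instead make $z$ a non-objective variable, constrained by $(\neg z\vee\neg x_h)$ and appearing in no other clause. The satisfying assignments of $\varphi'$ are then the pairs $(x,z)$ with $\varphi(x)=1$, where $z=0$ is forced if $x_h=1$ and $z$ is free if $x_h=0$.

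The score analysis would go as follows. The objective $J$ depends only on $x$, so the maximizers of $J$ over satisfying assignments of $\varphi'$ all have $x=x^*$, the unique lexicographic maximum. If $x^*_h=1$, the only completion is $z=0$, and \Cref{thm:compiler} yields exactly one optimal committee. If $x^*_h=0$, both $z=0$ and $z=1$ are satisfying, and by the bijection in \Cref{thm:compiler} there are exactly two optimal committees, $W_{(x^*,0)}$ and $W_{(x^*,1)}$. The clause voters of the added binary clause give the same score to both, because every satisfied clause reaches its fixed maximum. Consequently $|\Opt|=1$ if and only if $x^*_h=1$, and $|\Opt|\le2$ always holds. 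This gives a many-one reduction from \LexBit to \UniquePAV.

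The main point to verify is that \Cref{thm:compiler} allows variables that are not objective bits and still gives a bijection onto satisfying assignments maximizing $J$. As stated, it takes arbitrary distinct designated variables $b_1,\ldots,b_\ell$ among the formula variables, with $\ell\ge1$, so taking $\{b_i\}=\{x_1,\ldots,x_s\}$ and leaving $z$ undesignated is permitted. I also need $\varphi'$ to be satisfiable, which holds by taking any satisfying $x$ of $\varphi$ together with $z=0$. Its clauses must be nonempty with at most three literals, which holds after the preprocessing that removes tautologies and repeated literals. The construction is polynomial, which completes the hardness argument.
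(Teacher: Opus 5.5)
Your proposal is correct and matches the paper's proof: the $\DeltaTwo$ upper bound computes $S^*$ and then negates one NP query asking for two distinct optimal committees. Hardness adds a fresh variable $z$ with the clause $(\neg x_h\vee\neg z)$ and applies \Cref{thm:compiler} with only $x_1,\ldots,x_s$ as objective bits, so there is one optimum when $x_h^*=1$ and exactly two when $x_h^*=0$.
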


For the upper bound, compute $S^*$ as in the proof of \Cref{prop:upper}. Ask whether two distinct size-$k$ committees both have score $S^*$. This is an NP query, and its negation decides uniqueness.

For hardness, we add one variable and one clause to the source formula. The additional variable has one permitted value when the designated bit is one and two permitted values when it is zero. We leave this variable out of the objective.

\begin{proof}[Proof of \Cref{thm:uniqueness}]
Take an instance of \LexBit with satisfiable formula $\varphi$, variable order $x_1,\ldots,x_s$, and designated variable $x_h$. Introduce a fresh variable $z$ and set
\[
    \psi(x,z)=\varphi(x)\wedge(\neg x_h\vee\neg z).
\]
This is a satisfiable 3-CNF formula: every satisfying assignment of $\varphi$ extends by setting $z=0$. Apply \Cref{thm:compiler} to $\psi$, designating only $x_1,\ldots,x_s$ as objective bits, in their given order. The objective is
\[
    J(x,z)=\sum_{i=1}^s2^{s-i}x_i.
\]
Every maximizing assignment restricts to the lexicographically maximum satisfying assignment $x^*$ of $\varphi$. If $x_h^*=1$, the new clause forces $z=0$; if $x_h^*=0$, both values of $z$ satisfy it and give the same objective. The bijection in \Cref{thm:compiler} therefore gives
\begin{equation}
\label{eq:one-or-two}
    |\Opt(\mathcal E_\psi)|=
    \begin{cases}
        1,&x_h^*=1,\\
        2,&x_h^*=0.
    \end{cases}
\end{equation}
This reduces the satisfiable restriction of \LexBit to \UniquePAV, proving $\DeltaTwo$-hardness even for instances with at most two optimal committees. The upper bound was established at the start of this section.
\end{proof}

\begin{example}
Let $\varphi=(x_1\vee x_2)\wedge(\neg x_1\vee\neg x_2)$. Its satisfying assignments are $01$ and $10$, with lexicographic maximum $10$. Adding $(\neg x_1\vee\neg z)$ leaves only the maximizing extension $(x_1,x_2,z)=(1,0,0)$ and hence one optimal committee. Adding $(\neg x_2\vee\neg z)$ instead leaves $(1,0,0)$ and $(1,0,1)$ and hence two optimal committees.
\end{example}

\section{Complexity of Counting Optimal Committees}
\label{sec:counting}

\begin{theorem}
\label{thm:counting}
The function $\CountPAV$ is $\SharpOptP$-complete under metric reductions. More precisely, for every $f\in\SharpOptP$ there is a polynomial-time construction of an approval election $\mathcal E_x$ such that
\begin{equation}
\label{eq:count-main}
    \CountPAV(\mathcal E_x)=f(x)+1.
\end{equation}
If a witness representation of $f$ always has a nonempty feasible set, there is instead a parsimonious reduction from $f$ to $\CountPAV$.
\end{theorem}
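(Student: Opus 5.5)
The proof splits into membership of $\CountPAV$ in $\SharpOptP$ and $\SharpOptP$-hardness via a metric reduction built on \Cref{thm:compiler}. Throughout I use the standard definition from \Cref{app:complexity}: $f\in\SharpOptP$ means there are a polynomial-time predicate $R(x,y)$ for feasibility and a polynomial-time objective $g(x,y)\in\bits{p(|x|)}$ such that $f(x)$ is the number of feasible $y$ attaining the maximum value of $g$ among feasible $y$, with $f(x)=0$ when no $y$ is feasible.

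\emph{Upper bound.} I take the witnesses to be characteristic vectors of size-$k$ committees and the objective to be the integer score $S(W)$, padded to a fixed polynomial bit length. Both are polynomial-time computable by the bounds in \Cref{sec:preliminaries}. Since every committee encoding is unique, the number of maximizing witnesses equals $|\Opt(\mathcal E)|$. This shows $\CountPAV\in\SharpOptP$.

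\emph{Hardness.} Fix $f$ with $R$ and $g$ as above.
\begin{enumerate}
\item Use the Cook--Levin construction with the unique-extension property to build a 3-CNF formula $\theta_x(y,w,o,\rho)$. Here $w$ are auxiliary gate variables, $o$ are the output bits of $g(x,y)$, and $\rho$ is the bit $R(x,y)$. Every $y$ extends uniquely to a satisfying assignment, and in that assignment $o=g(x,y)$ and $\rho=R(x,y)$. Add the unit clause $(\rho)$ to obtain $\varphi_x$.
\item To handle the case with no feasible witness, add a fresh variable $e$ and the 3-CNF encoding of $\varphi_x\vee(e\wedge \text{all other variables}=0)$. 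Equivalently, clause-wise add $\neg e$ to each clause of $\varphi_x$ and add clauses $(\neg e\vee\neg v)$ for every other variable $v$.
\item The objective bits are $b_1=\neg e$ (made positive by introducing a variable $e'$ with $e'\leftrightarrow\neg e$), followed by $o_1,\ldots,o_p$.
\item The resulting formula is satisfiable: the representative witness with $e=1$ and all other variables $0$ satisfies it, is the unique assignment with $e=1$, and has $J=0$. Every feasible $y$ gives exactly one satisfying assignment with $e=0$, and its objective is $2^p+g(x,y)$.
\item By the bijection in \Cref{thm:compiler}, $\CountPAV(\mathcal E_x)=f(x)$ when a feasible $y$ exists.
\end{enumerate}
This gives parsimony whenever the feasible set is nonempty, which proves the last sentence of the theorem; in that case the $e$ gadget is simply omitted. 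To obtain $f(x)+1$ uniformly, I instead give the representative the same objective value as the feasible optimum. I would try this by leaving $e$ out of the objective and adding the representative through a disjunction that allows it exactly with objective bits unconstrained. A cleaner route is as follows. Count pairs $(y,\text{flag})$ where the flag branch is a single extra assignment whose objective is forced to equal the maximum attainable, which seems circular. So I would instead use the route suggested by the introduction: pad $g$ so that infeasible witnesses get objective $0$ and feasible ones get $1\,g(x,y)$ (leading $1$ bit). Then add exactly one dummy witness per maximal value class. I expect this step, realizing the ``$+1$'' uniformly without knowing whether the optimum is attained by feasible witnesses, to be the main obstacle. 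The fallback is to define $f'(x)=f(x)+1$ via a modified $\SharpOptP$ representation in which the all-zero extra witness ties with the optimum because its objective is computed as that same optimum. This is impossible in polynomial time, so I would first check whether the paper's \Cref{app:complexity} definition of metric reduction permits the output to be $f(x)+1$ via a case split.

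\emph{Final assembly.} Once the formula and objective bits are fixed, everything else follows from \Cref{thm:compiler}. Optimal committees biject with satisfying assignments maximizing $J$, and the unique-extension property turns this bijection into one with optimal witnesses (plus the representative). The remaining verifications are polynomial size and the correctness of the Tseitin-style encoding, which are routine. I would close by citing the standard fact that every $\SharpOptP$ function has such a representation.
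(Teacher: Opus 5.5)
\textbf{The uniform $+1$ construction, which is the main claim, is missing.} Your upper bound and your parsimonious case are fine and match the paper. In the parsimonious case you use a unique-extension encoding of feasibility and objective, then apply \Cref{thm:compiler} rewarding only the objective's output bits. But you leave the main claim, $\CountPAV(\mathcal E_x)=f(x)+1$ for every $f\in\SharpOptP$, unresolved. What you actually build with the $e$-gadget gives $\CountPAV(\mathcal E_x)=\max\{f(x),1\}$. No polynomial-time postprocessing $h$ can invert this, since $h(x,1)$ would have to decide whether the source has any feasible witness, an NP-hard question. A ``case split'' on feasibility fails for the same reason. Separately, appending $\neg e$ to three-literal clauses produces four-literal clauses, which \Cref{thm:compiler} does not accept without re-encoding. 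Your other ideas do not go through either. ``One dummy witness per maximal value class'' presupposes a canonical representative of the optimal class. Forcing the dummy's objective to equal the optimum is, as you note, circular.

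\textbf{The missing idea} is in \Cref{lem:plusone}: separate the witness that must be unique from the witness being counted. Use tuples $(a,\sigma,t,y)$.
\begin{itemize}
\item When $a=1$, a representative $t$ must be feasible and the tuple gets objective $1+2^p v(x,t)+\val(t)$. Maximization therefore first maximizes $v(x,t)$ and then selects the unique lexicographically largest optimal $t^*$.
\item The counted coordinate $y$ receives no reward. It is only constrained to satisfy either $\sigma=1$, $A(x,y)$ and $v(x,y)=v(x,t)$, or the single extra option $\sigma=0$, $y=0^p$.
\item A fallback tuple with $a=0$, $\sigma=0$, $t=y=0^p$ has objective $0$. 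It loses to every active tuple and is the sole optimum when the source is infeasible.
\end{itemize}
This breaks the circularity you ran into. The extra witness never computes the optimum; it inherits its value from $t$, which the objective itself selects. The lexicographic tie-break acts only on $t$, so it does not collapse the multiplicity in $y$. You get a nonempty feasible set with exactly $f(x)+1$ optima in both the feasible and infeasible cases. Your encoding pipeline then applies unchanged, and $h(x,z)=\max\{z-1,0\}$ completes the metric reduction.
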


The main issue is multiplicity. Lexicographically ordering every variable would isolate one satisfying assignment and destroy the count we want to preserve. We instead encode the source objective as designated formula variables and reward only those bits. There is a separate issue when the source has no feasible witness: a PAV instance cannot have zero optima. The next construction makes the source feasible on every input and adds exactly one to its number of optimal witnesses.

\begin{example}
In \Cref{ex:clause}, the formula $x_1\vee x_2$ has three satisfying assignments. If the source objective is constant, all three must remain optimal. Reusing the membership objective $2x_1+x_2$ would retain only $11$ and change the answer from three to one. For counting, only the bits of the source objective may receive rewards; its witnesses remain tied whenever their objective values agree.
\end{example}

To add one optimal witness, we use a representative $t$ that selects the source optimum value and is itself made unique. A second witness $y$ ranges over all source optima. One extra flag value supplies a single additional solution at the selected value. This separates the choice that must be unique from the choices that must be counted.

\begin{example}
\label{ex:plusone}
Suppose the feasible source strings are $00,01,10$, with values $3,3,1$. There are two source optima. Choose the larger optimal representative $t=01$, while allowing $y$ to be either $00$ or $01$. The intended optimal tuples $(a,\sigma,t,y)$ are
\[
\begin{array}{c|c|c|c|l}
  a&\sigma&t&y&\text{role}\\ \hline
  1&0&01&00&\text{one additional optimum}\\
  1&1&01&00&\text{source optimum }00\\
  1&1&01&01&\text{source optimum }01
\end{array}
\]
The flag $\sigma$ keeps the additional tuple distinct even though $00$ is also a source optimum. A separate tuple with $a=0$ is a fallback: it loses to every active tuple when the source is feasible, and is the only feasible tuple when the source has no feasible string. The construction below realizes exactly this behavior for arbitrary sources.
\end{example}

\begin{lemma}
\label{lem:plusone}
For every $f\in\SharpOptP$, one can construct polynomial-time feasibility and objective computations defining a nonempty feasible set whose number of optimal witnesses on input $x$ is exactly $f(x)+1$.
\end{lemma}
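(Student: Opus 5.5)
Fix $f\in\SharpOptP$ through a polynomial $p$, a polynomial-time feasibility predicate $F(x,y)$ for $y\in\bits{p(|x|)}$, and a polynomial-time objective $g(x,y)\in\bits{q(|x|)}$, so that $f(x)$ is the number of feasible $y$ with $g(x,y)=\max\{g(x,y'):F(x,y')\}$, and $f(x)=0$ when nothing is feasible (see \Cref{app:complexity}). Following \Cref{ex:plusone}, the new witnesses will be tuples $(a,\sigma,t,y)$ with $a,\sigma\in\{0,1\}$ and $t,y\in\bits{p}$. The new feasibility predicate $F'$ accepts
\[
(0,0,0^p,0^p)
\]
(the fallback) and, when $a=1$, accepts exactly the tuples in which $F(x,t)$ and $F(x,y)$ hold, $g(x,y)=g(x,t)$, and, if $\sigma=0$, also $y=0^p$. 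The new objective is the concatenation
\[
g'(a,\sigma,t,y)=a\cdot g(x,t)\,t\,,
\]
that is, the bit $a$, followed by the bits of $g(x,t)$, followed by the bits of $t$ read as a binary number, all compared lexicographically. On inactive tuples every bit after $a$ is zero. The objective does not depend on $\sigma$ or $y$, so these coordinates remain tied.

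Both $F'$ and $g'$ are clearly polynomial-time computable, and the feasible set always contains the fallback. Now count the optimal tuples.
\begin{itemize}
\item If the source has no feasible string, the only feasible tuple is the fallback. It is then the unique optimum, and $1=f(x)+1$.
\item Otherwise, active tuples exist. Their leading bit $a=1$ beats the fallback, so every optimum is active.
\item Among active tuples, the $g$-block forces $g(x,t)$ to be the source optimum value $v^*$.
\item The $t$-block then forces $t=t^*$, the numerically largest feasible string attaining $v^*$. Thus $t$ is unique, as in \Cref{ex:plusone}.
\item With $t=t^*$ fixed, $\sigma=1$ allows exactly the feasible $y$ with $g(x,y)=v^*$, which gives $f(x)$ tuples.
\item With $t=t^*$ fixed, $\sigma=0$ allows only $y=0^p$. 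This tuple is feasible only if $0^p$ is itself an optimal feasible string.
\end{itemize}

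The last point is the main obstacle: the extra optimum at $\sigma=0$ would vanish whenever $0^p$ is not a source optimum. I would fix this by making the $\sigma=0$ tuple require $y=t$ instead of $y=0^p$. Since $t=t^*$ is feasible and optimal, this contributes exactly one tuple, and the $\sigma$ coordinate keeps it distinct from the $\sigma=1$ tuple with $y=t^*$. The total is then $f(x)+1$.

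It remains to confirm that the number of optimal tuples is exactly $f(x)+1$, and not merely that the optimal value is as claimed. The points to check are that $\sigma$ and $y$ never enter $g'$, that $t^*$ is unique because $t$ is read injectively as a number, and that the two cases $\sigma=0$ and $\sigma=1$ are disjoint. Finally, I would note that $F'$ and $g'$ fit the $\SharpOptP$ witness format in \Cref{app:complexity}: the tuple is a string of polynomial length, and $g'$ is written in binary with polynomially many bits. These computations can then later be compiled into a satisfiable formula for \Cref{thm:compiler}.
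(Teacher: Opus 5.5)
Your final construction, with the $\sigma=0$ branch pinned to $y=t$, is correct and essentially matches the paper's proof: the same tuples $(a,\sigma,t,y)$, the same single fallback, and the same objective ordering by $a$, then $v(x,t)$, then $\val(t)$. The only difference is how the obstacle you identified is avoided. The paper imposes only $A(x,t)$ and $y=0^p$ on the $\sigma=0$ branch, without requiring $A(x,y)$ or $v(x,y)=v(x,t)$ there, and this likewise gives exactly one extra tuple for each feasible $t$.
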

\begin{proof}
Use $A(x,y)$, $v(x,y)$, and $p$ from \Cref{def:sharpopt}, and abbreviate $p=p(|x|)$. We may assume $p\ge1$ by adding a bit fixed to zero if necessary. The new witnesses are tuples
\[
    (a,\sigma,t,y),\qquad a,\sigma\in\{0,1\},\quad t,y\in\bits p.
\]
Their feasibility conditions are as follows:
\begin{enumerate}
    \item If $a=0$, require $\sigma=0$ and $t=y=0^p$.
    \item If $a=1$, require $A(x,t)$ and one of
    \[
        \sigma=0,\quad y=0^p,
        \qquad\text{or}\qquad
        \sigma=1,\quad A(x,y),\quad v(x,y)=v(x,t).
    \]
\end{enumerate}
The first branch supplies exactly one feasible witness. In the second branch, $t$ will select one representative of the optimal source value, while $y$ will retain all source witnesses at that value. Define
\begin{equation}
\label{eq:plusone-objective}
    J(a,\sigma,t,y)=
    \begin{cases}
        0,&a=0,\\
        1+2^p v(x,t)+\val(t),&a=1,
    \end{cases}
\end{equation}
where $\val(t)$ is the binary integer represented by $t$. This objective and the feasibility predicate are computable in polynomial time. If $v$ uses at most $B$ bits, then $J\le2^{p+B}$, so $p+B+1$ bits suffice for its output.

If $Y_x$ is empty, no witness with $a=1$ is feasible. The single witness with $a=0$ is optimal, and the number of optima is $1=f(x)+1$.

Suppose $Y_x$ is nonempty. Every witness with $a=1$ has objective at least one and therefore defeats the witness with $a=0$. A unit increase in $v(x,t)$ increases the term $2^pv(x,t)$ by more than the largest possible decrease in $\val(t)$. Maximizing \eqref{eq:plusone-objective} thus first maximizes $v(x,t)$, and then chooses the unique lexicographically largest optimal representative $t^*$. With $t=t^*$ fixed, the optimal witnesses consist of one tuple with $\sigma=0,y=0^p$, and one tuple with $\sigma=1$ for each optimal source witness $y$. The bit $\sigma$ keeps the additional tuple distinct even if $0^p$ itself is an optimal source witness. There are exactly $f(x)+1$ optimal tuples.
\end{proof}

Only the representative $t$ receives a lexicographic preference in this construction. The objective does not distinguish the optimal choices of $y$. We can now combine \Cref{lem:plusone} with \Cref{thm:compiler} to prove the counting theorem.

\begin{table}[ht]
\centering
\small
\begin{tabular}{@{}p{0.43\linewidth}p{0.5\linewidth}@{}}
\toprule
Transformation & Reason no multiplicity is introduced\\
\midrule
Feasible tuple to satisfying assignment & The standard formula encoding has one satisfying extension per feasible input.\\
Satisfying assignment to committee & Variable choices determine the committee; all auxiliary candidates are fixed.\\
\bottomrule
\end{tabular}
\caption{The two bijections used in the counting reduction. The additional optimum is introduced before these transformations.}
\label{tab:bijections}
\end{table}

\begin{proof}[Proof of \Cref{thm:counting}]
For membership, guess a candidate characteristic vector, accept exactly when it encodes a size-$k$ committee $W$, and output its integer score $S(W)$. Each committee has one accepting path. The paths with maximum output are therefore counted by $\CountPAV$. This proves membership in $\SharpOptP$ as characterized in \Cref{def:sharpopt}.

For hardness, fix any $f\in\SharpOptP$ and an input $x$. Apply \Cref{lem:plusone}. Using the standard 3-CNF encoding of polynomial-time computations by full gate equivalences, construct in polynomial time from $x$ a formula $\varphi_x$ with free inputs exactly $(a,\sigma,t,y)$. Each feasible tuple has exactly one satisfying extension, and infeasible tuples have none. Distinct designated variables $b_1,\ldots,b_\ell$ record $J$ with a fixed sufficient length $\ell\ge1$, most significant bit first; constants and padding are fixed.

The formula is satisfiable because \Cref{lem:plusone} supplies a feasible tuple. The correspondence preserves
\[
    J=\sum_{i=1}^{\ell}2^{\ell-i}b_i.
\]

Apply \Cref{thm:compiler}, designating only the objective output bits $b_1,\ldots,b_\ell$. No other formula variable receives a reward. There is then a chain of bijections
\begin{align*}
    \{\text{optimal PAV committees}\}
    &\longleftrightarrow
      \{\text{satisfying assignments of $\varphi_x$ maximizing $J$}\}\\
    &\longleftrightarrow
      \{\text{optimal feasible tuples of \Cref{lem:plusone}}\}.
\end{align*}
The first bijection follows from \Cref{thm:compiler}; the second follows from the unique extension of each feasible tuple. By \Cref{lem:plusone}, the resulting number of committees is $f(x)+1$, proving \eqref{eq:count-main}. Every stage is polynomial in $|x|$, including the explicit list of voters. The total postprocessing function $h(x,z)=\max\{z-1,0\}$ recovers $f(x)$, since the constructed oracle value is at least one. This gives a metric reduction.

If the source representation has a feasible witness on every input, omit \Cref{lem:plusone}. Apply the same standard encoding directly to $A(x,y)$ and $v(x,y)$, and then apply \Cref{thm:compiler} with only the output bits of $v$ designated. The same bijection now preserves exactly $f(x)$ optimal witnesses, proving the parsimonious statement.
\end{proof}

Every PAV instance has at least one optimum, so a parsimonious reduction cannot represent a source value of zero. The metric reduction handles this obstruction uniformly through the additive one. We next relate the counting problem to $\SharpSAT$ and other counting classes.

\begin{corollary}
\label{cor:turing}
We have
\[
    \CountPAV\equiv_T^p\SharpSAT,
    \qquad
    \CountPAV\in\FP^{\SharpP}\cap\SharpP^{\NP}.
\]
\end{corollary}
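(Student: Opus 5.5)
We prove the four containments separately. Write $T$ for a polynomial bound on the bit length of $S^*$, which exists since $0\le S(W)\le nkD_k$.

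\emph{Hardness direction, $\SharpSAT\le_T^p\CountPAV$.} We invoke \Cref{thm:counting}. Since $\SharpSAT\in\SharpP\subseteq\SharpOptP$ (take a constant objective, which makes every accepting witness optimal), we obtain in polynomial time an election $\mathcal E_\varphi$ with $\CountPAV(\mathcal E_\varphi)=\SharpSAT(\varphi)+1$. One oracle query followed by subtracting one answers $\SharpSAT(\varphi)$. Alternatively, apply \Cref{thm:compiler} directly: first add a fresh variable $b$ with the trivial clause structure to $\varphi\vee$(fallback), and designate only $b$. The metric reduction is simpler, so we will use it.

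\emph{Upper direction, $\CountPAV\in\FP^{\SharpP}$.} First compute $S^*$. We plan to do this with a $\SharpP$ oracle rather than an NP oracle. For a threshold $T'$, define the $\SharpP$ function counting size-$k$ committees $W$ with $S(W)\ge T'$. This count is positive if and only if the NP threshold query \eqref{eq:threshold} holds, so binary search over $[0,nkD_k]$ with polynomially many $\SharpP$ queries yields $S^*$. Then make one more query to the $\SharpP$ function that counts committees $W$ with $S(W)=S^*$, with $S^*$ supplied as part of the input; that value is $\CountPAV(\mathcal E)$. Since $\SharpSAT$ is $\SharpP$-complete under parsimonious reductions, every $\SharpP$ oracle call can be replaced by a $\SharpSAT$ call. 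This gives $\CountPAV\le_T^p\SharpSAT$, and together with the hardness direction we obtain Turing equivalence.

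\emph{Containment $\CountPAV\in\SharpP^{\NP}$.} Build an NP-oracle nondeterministic machine that guesses an $m$-bit characteristic vector, rejects unless it encodes a size-$k$ committee $W$, computes $S(W)$, and asks the NP oracle whether some size-$k$ committee $U$ satisfies $S(U)\ge S(W)+1$. It accepts exactly when the oracle answers no. The scores are integers, so acceptance is equivalent to $S(W)=S^*$. Each committee has exactly one guessing path, so the number of accepting paths is $|\Opt(\mathcal E)|$.

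The only step needing care is making sure the reductions in the first part are genuinely Turing reductions with polynomially bounded queries. In particular, the postprocessing $z\mapsto z-1$ must be justified, which holds because the oracle value is always at least one. This is routine, so we expect no real obstacle; the argument amounts to carefully assembling \Cref{thm:counting}, \Cref{prop:upper}'s binary search, and standard $\SharpP$ facts.
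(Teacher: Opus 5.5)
Your proposal is correct and follows the paper's proof step for step. You derive the lower bound from the metric reduction of \Cref{thm:counting}, which the paper lists alongside its citation of Janeczko and Faliszewski; you obtain the $\FP^{\SharpP}$ bound by binary search with counting queries made parsimonious through $\SharpSAT$; and you obtain the $\SharpP^{\NP}$ bound by guessing a committee and asking the NP oracle whether a strictly better one exists. The garbled ``alternatively'' remark about applying \Cref{thm:compiler} directly is unnecessary and can simply be dropped, since you do not rely on it.
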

\begin{proof}
The Turing lower bound follows already from the $\SharpP$-hardness of Janeczko and Faliszewski~\cite{janeczko2023ties}; it also follows from \Cref{thm:counting}.

For the upper bound, replace the NP queries used to compute $S^*$ in the proof of \Cref{prop:upper} by $\SharpSAT$ queries. Encode the polynomial-time test $|W|=k$ and $S(W)\ge T$ as a 3-CNF formula with one satisfying extension per accepted candidate characteristic vector. Its count is positive exactly when the threshold is feasible. After binary search determines $S^*$, use the same encoding for $|W|=k$ and $S(W)=S^*$. One final $\SharpSAT$ query then returns exactly the number of optimal committees. This proves the Turing upper bound and membership in $\FP^{\SharpP}$.

For membership in $\SharpP^{\NP}$, guess one committee characteristic vector and reject if its size is not $k$. Query whether a size-$k$ committee has strictly greater integer score. Accept exactly when the answer is no. The accepting paths correspond one-to-one to optimal committees.
\end{proof}

\begin{corollary}
\label{cor:not-sharpp}
If $\CountPAV\in\SharpP$, then $\NP=\coNP$.
\end{corollary}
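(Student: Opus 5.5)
We argue by contraposition: from a $\SharpP$ description of $\CountPAV$ we extract a polynomial-time nondeterministic procedure for a $\coNP$-complete problem. The natural choice is unsatisfiability of 3-CNF formulas, and the natural tool is the uniqueness dichotomy of \Cref{thm:uniqueness}. The key observation is that membership of $\CountPAV$ in $\SharpP$ yields a nondeterministic machine $N$ whose number of accepting paths on $\mathcal E$ equals $|\Opt(\mathcal E)|$. In particular, $N$ has at least two accepting paths if and only if $\mathcal E$ has at least two optimal committees. So the language $\{\mathcal E : |\Opt(\mathcal E)|\ge2\}$ lies in $\NP$: guess two distinct computation paths of $N$ and check that both accept. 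Hence the complement of \UniquePAV is in $\NP$, which gives $\UniquePAV\in\coNP$.

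Next we combine this with hardness. By \Cref{thm:uniqueness}, \UniquePAV is $\DeltaTwo$-hard under many-one reductions. Since $\coNP\subseteq\DeltaTwo$, every $\coNP$ language, and equally every $\NP$ language, many-one reduces to \UniquePAV. We already have $\UniquePAV\in\coNP$, and $\coNP$ is closed under many-one reductions, so $\NP\subseteq\coNP$. Complementing both sides gives $\coNP\subseteq\NP$, and therefore $\NP=\coNP$.

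The only delicate point is the first step: justifying that counting paths really lets us test ``at least two'' within $\NP$. This requires that the $\SharpP$ machine $N$ have polynomially bounded, distinct path encodings. That is standard, because a path is identified by its polynomial-length sequence of nondeterministic choices, and two distinct such strings both leading to acceptance certify a count of at least two.

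As a fallback that avoids uniqueness, the same argument can be run through \AllPAV, or more directly through the construction of \Cref{thm:uniqueness}. That construction maps each \LexBit instance to an election with $1$ or $2$ optima according to the designated bit. Checking whether the count is at least two is therefore an $\NP$ test for the bit being $0$, which places a $\DeltaTwo$-hard problem in $\coNP$. I expect no real obstacle beyond stating the reduction closure cleanly.
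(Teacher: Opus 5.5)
Your argument is correct and is essentially the paper's: both place $L_{\ge2}=\{\mathcal E:|\Opt(\mathcal E)|\ge2\}$ in $\NP$ by guessing two distinct accepting paths, and both then use the $\DeltaTwo$-hardness from \Cref{thm:uniqueness}; the paper concludes $\coNP\subseteq\DeltaTwo\subseteq\NP$ by showing $L_{\ge2}$ itself is $\DeltaTwo$-hard (applying the uniqueness reduction to complements), while you conclude dually that \UniquePAV lies in $\coNP$ and hence $\NP\subseteq\coNP$. Your aside about running the argument through \AllPAV is not supported as stated, since a path count for $\CountPAV$ gives no direct test of whether a given candidate lies in every optimum, but the main proof does not depend on it.
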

\begin{proof}
Suppose a nondeterministic polynomial-time machine computes $\CountPAV$ on every instance. Let $L_{\ge2}$ be the language of elections with at least two optimal committees. This language is in $\NP$: guess two distinct accepting paths of the machine and verify both. Their distinction can be checked from their nondeterministic choice sequences, with unused padding fixed to zero.

To prove $\DeltaTwo$-hardness of $L_{\ge2}$, take any language $L_0\in\DeltaTwo$. Closure under complement gives $\overline{L_0}\in\DeltaTwo$, so apply the uniqueness hardness reduction to $\overline{L_0}$. Its output is an election $\mathcal E_x$, with
\[
    |\Opt(\mathcal E_x)|=1
    \iff x\in\overline{L_0}
    \iff x\notin L_0.
\]
Since every election has at least one optimum, $\mathcal E_x\in L_{\ge2}$ if and only if $x\in L_0$. Hence, $\DeltaTwo\subseteq\NP$. As $\coNP\subseteq\DeltaTwo$, we obtain $\coNP\subseteq\NP$, and taking complements gives $\NP=\coNP$.
\end{proof}

\Cref{cor:turing} establishes Turing equivalence with $\SharpSAT$ and an upper bound in $\SharpP^{\NP}$. It does not assert $\SharpP^{\NP}$-hardness. Similarly, \Cref{cor:not-sharpp} is a conditional implication rather than an unconditional separation of counting classes.

\section{Conclusion}
\label{sec:conclusion}

Membership in some or all optimal PAV committees and uniqueness of the optimum are $\DeltaTwo$-complete. Counting the optimal committees is $\SharpOptP$-complete under metric reductions, and the reduction adds exactly one to the source count. These statements concern the whole maximizing set, including ties that would be hidden by selecting one output committee.

The proofs separate clause satisfaction from objective comparison. Fixed clause ballots give every satisfying assignment the same base score, and harmonic increments order these assignments with polynomially many voters. Variable pairs and the standard unique-extension formula encoding preserve the number of optimal solutions. Existing algorithms exploit the number of voters and restrictions on approval profiles~\cite{yang2023parameterized,lassota2026structured}. It remains useful to determine which such restrictions also simplify candidate membership, uniqueness, and counting, and how limits on ballot size constrain the harmonic increments available to hardness constructions.

\bibliographystyle{alphaurl}
\bibliography{reference}

\clearpage
\appendix
\section{Complexity Background}
\label{app:complexity}
\label{sec:complexity}

This appendix reviews the background for our decision and counting classifications. It introduces oracle computation and $\DeltaTwo$, then the class $\SharpOptP$ and the reductions used for counting. Readers familiar with these notions may proceed directly to the proofs.

Write $\Sigma=\{0,1\}$. A decision problem is identified with a language $L\subseteq\Sigma^*$ of yes instances, and its complement is $\overline L=\Sigma^*\setminus L$. A polynomial-time many-one reduction from $L$ to $K$ is a polynomial-time computable map $r:\Sigma^*\to\Sigma^*$ satisfying
\[
    x\in L\quad\Longleftrightarrow\quad r(x)\in K.
\]
For a decision class $\mathcal C$, a language is $\mathcal C$-hard if every language in $\mathcal C$ has such a reduction to it. It is $\mathcal C$-complete if it is also in $\mathcal C$.

The class $\Pclass$ contains the languages decidable in deterministic polynomial time. A language $L$ is in $\NP$ if there are a polynomial $p$ and a polynomial-time predicate $R$ such that
\[
    x\in L\quad\Longleftrightarrow\quad
    \exists y\in\bits{p(|x|)}:\ R(x,y)=1.
\]
Thus, membership has a certificate of polynomial length that can be checked in polynomial time. A language belongs to $\coNP$ when its complement belongs to $\NP$.

An NP oracle answers membership queries to a fixed NP-complete language, such as satisfiability. An oracle machine receives each yes-or-no answer in one step, but must write its queries within its running-time bound. Queries are adaptive when a later query may depend on earlier answers.

\begin{definition}
\label{def:delta}
A language $L$ belongs to $\DeltaTwo=\Pclass^{\NP}$ if a deterministic machine with an NP oracle decides whether $x\in L$ in time polynomial in $|x|$. The machine may choose its queries adaptively and may make polynomially many queries.
\end{definition}

The related-work discussion also uses $\ThetaTwo$ and $\SigmaTwo$. Restricting a deterministic polynomial-time machine to $O(\log(|x|+2))$ adaptive NP queries gives $\ThetaTwo$; equivalently, this class permits polynomially many queries chosen before any answer is received. Allowing nondeterministic polynomial-time computation with an NP oracle gives $\SigmaTwo=\NP^{\NP}$.

Let $\mathbb N=\{0,1,\ldots\}$, with integers written in binary. A function belongs to $\SharpP$ if its value is the number of accepting paths of a nondeterministic polynomial-time machine. Equivalently, it counts fixed-length certificates satisfying a polynomial-time predicate. For example, $\SharpSAT$ counts the satisfying assignments of a Boolean formula.

Optimization adds an objective to the feasible computations. Krentel's class $\mathrm{OptP}$~\cite{krentel1988optimization} takes the maximum nonnegative integer output over the paths of a nondeterministic polynomial-time machine, all of which accept. Hermann and Pichler's class $\SharpOptP$~\cite{hermann2008counting} instead counts accepting paths attaining the maximum output, with count zero when there are no accepting paths. We use the following equivalent formulation in terms of witnesses.

\begin{samepage}
\begin{definition}
\label{def:sharpopt}
A function $f:\Sigma^*\to\mathbb N$ belongs to $\SharpOptP$ if there are a polynomial $p$, a polynomial-time feasibility predicate $A(x,y)$, and a polynomial-time computable nonnegative integer objective $v(x,y)$ of polynomial bit length such that, for
\[
    Y_x=\{y\in\bits{p(|x|)}:A(x,y)\},
\]
we have
\[
    f(x)=
    \begin{cases}
        |\argmax_{y\in Y_x}v(x,y)|,&Y_x\ne\varnothing,\\
        0,&Y_x=\varnothing.
    \end{cases}
\]
\end{definition}
\end{samepage}

To obtain fixed-length witnesses from machine paths, record the nondeterministic choices and fix all unused padding bits to zero. Each accepting path then has one witness. Conversely, a machine guesses a witness, checks $A(x,y)$, and outputs $v(x,y)$ on acceptance. These transformations preserve the objective and the number of witnesses at each value. In particular, a constant objective recovers ordinary $\SharpP$ counting, so $\SharpP\subseteq\SharpOptP$.

We also use oracle versions of counting and function classes. The notation $\FP^{\NP}$ denotes functions computed in deterministic polynomial time with an NP oracle. In $\FP^{\SharpP}$, the oracle returns the exact value of a $\SharpP$ function. The class $\SharpP^{\NP}$ counts accepting paths of nondeterministic polynomial-time machines that can query an NP oracle. Thus, an oracle algorithm for a counting function and a machine whose accepting paths are being counted describe different resources.

For counting functions, the reduction specifies how the target count may be used to compute the source count.

\begin{definition}
\label{def:metric}
A polynomial-time metric reduction~\cite{krentel1988optimization} from $f:\Sigma^*\to\mathbb N$ to $g:\Sigma^*\to\mathbb N$ consists of total polynomial-time computable functions $r:\Sigma^*\to\Sigma^*$ and $h:\Sigma^*\times\mathbb N\to\mathbb N$ such that
\[
    f(x)=h(x,g(r(x))).
\]
\end{definition}

A metric reduction uses one value of $g$ and computes the answer from that value and the original input. A parsimonious reduction requires $f(x)=g(r(x))$, preserving the count exactly.

We write $f\le_T^p g$ if a deterministic polynomial-time algorithm computes $f$ using an oracle for $g$. It may make polynomially many adaptive queries. The functions are polynomial-time Turing equivalent, written $f\equiv_T^p g$, when both $f\le_T^p g$ and $g\le_T^p f$ hold.

For a counting class, hardness requires a reduction from every function in the class, and completeness additionally requires membership. The reduction notion is specified in the completeness statement.

\end{document}